\pgfplotsset{compat=1.14} 
\begin{document}
\mainmatter              
\title{Individual Fairness for Social Media Influencers}
\titlerunning{Individual Fairness for SM Influencers}  
%

\author{Stefania Ionescu \and Nicolò Pagan \and Anikó Hannák}
\authorrunning{Stefania Ionescu et al.} 
%
\tocauthor{Stefania Ionescu, Nicolò Pagan, Anikó Hannák}
\institute{Social Computing Group, University of Zürich, Zürich 8050, Switzerland,\\
\email{ionescu@ifi.uzh.ch}, \email{pagan@ifi.uzh.ch}, \email{hannak@ifi.uzh.ch},
}

\maketitle              

\begin{abstract}
Nowadays, many social media platforms are centered around content creators (CC). On these platforms, the tie formation process depends on two factors: (a) the exposure of users to CCs (decided by, e.g., a recommender system), and (b) the following decision-making process of users. Recent research studies underlined the importance of content quality by showing that under exploratory recommendation strategies, the network eventually converges to a state where the higher the quality of the CC, the higher their expected number of followers. 
In this paper, we extend prior work by (a) looking beyond averages to assess the fairness of the process and (b) investigating the importance of exploratory recommendations for achieving fair outcomes. Using an analytical approach, we show that non-exploratory recommendations converge fast but usually lead to unfair outcomes. Moreover, even with exploration, we are only guaranteed fair outcomes for the highest (and lowest) quality CCs.
\let\thefootnote\relax\footnotetext{This is a preprint of the following chapter: Stefania Ionescu, Nicolò Pagan, and Anikó Hannák , `Individual Fairness for Social Media Influencers', published in Complex Networks and Their Applications XI, Volume I, edited by Hocine Cherifi, Rosario Nunzio Mantegna, Luis M. Rocha, Chantal Cherifi, Salvatore Micciche, 2023, Springer reproduced with permission of Springer Nature Switzerland AG 2023. The final authenticated version is available online at: \url{https://doi.org/10.1007/978-3-031-21127-0_14}.}
\keywords{
Social network formation, individual fairness, Markov Chains
}
\end{abstract}
\section{Introduction}



The past couple of decades brought a steep increase in the impact social media platforms have on our lives, e.g., by shaping the information we receive \cite{bakshy2012role} and the opinions we form \cite{hall2018brexit}. 
During this time, platforms previously designed to only connect real-life friends slowly encouraged users to follow strangers based on their content.
Today, platforms such as YouTube, Twitter, Instagram, and TikTok are heavily centred around User Generated Content (UGC) and use recommender systems (RSs) to facilitate the exploration of content.
In response to this change, some users specialize in creating even semi-professional content that can attract more and more followers to the point that they can make revenue based on their audience.
Similarly to the labour market \cite{adams1963towards}, it is then natural to expect that these online platforms guarantee fairness for the content creators (CCs) in a way that equally good CCs should be rewarded similarly in terms of visibility and audience, and ultimately of income.
Since the network formation process is often heavily mediated by the RS, it is thus appropriate to ask whether they produce fair outcomes for the CCs.



To first understand the structure and the properties of these social media platforms, \cite{pagan2021meritocratic} proposed a simple model in which (a) each CC has an intrinsic and objective quality, (b) in each round, users receive a recommendation for a CC (which could be drawn from the uniform distribution, or from a preferential attachment (PA) process), and (c) users follow the recommended CC if this CC has a higher quality than all the user's current followees. By simulating this model, Pagan et al. \cite{pagan2021meritocratic} showed that the expected number of followers (in-degree) of the CCs follow a Zipf's law \cite{zipf2016human}. In particular, the expected rankings of CCs given by their quality and by the number of followers are the same.

While these results suggest that such a network-formation process leads to a fair outcome for the CCs, we observe two important limitations. First, the analysis is restricted to two exploratory recommendation processes (i.e., processes which take risky actions in order to uncover better options). However, in practice, not all RSs are as such. Recent literature in RSs argued why and how we should encourage diversity by finding the right balance between exploration and exploitation \cite{mcnee2006being, kunaver2017diversity, helberger2018exposure, gravino2019towards}. 
In practice, even pairs of popular items might not be jointly accessible to users, i.e., if a user is recommended and follows one of the two, they will not be recommended the other
\cite{guo2021stereotyping}. 
This puts real-world RS in stark contrast with PA and uniform random (UR) recommendations where every user can be recommended any CC.
Second, the authors only focus on the \emph{expected} number of followers \emph{at convergence} (i.e., after all users were eventually recommended the best CC). However, such ex-ante fairness does not imply ex-post fairness (i.e., even if in expectation CCs receive a number of followers proportional to their quality, many of the actual outcomes that could materialize are unfair) \cite{myerson1981utilitarianism}. 
Moreover, as also noted by the authors \cite{pagan2021meritocratic}, there might be long times to convergence. This means that even if a fair outcome would eventually be reached, this might not happen within reasonable time.

This paper aims to address these two challenges by bridging between the network science, RS, and algorithmic fairness communities. More precisely, the current work (a) defines both ex-ante and ex-post fairness metrics for CCs, (b) extends the model with extreme PA (a non-exploratory RS, inspired by the popularity RS\cite{chaney2018algorithmic}, which only recommends the most followed CCs), (c) uses Markov Chains to theoretically study the network formation process and its fairness under extreme PA, and (d) compares the network formation processes and their fairness under extreme PA with the ones under PA and UR recommendations (either by referring to prior work, or by novel analytical and numerical results).

\section{Related Work}
\paragraph{Networks.} 
Over the years, the complex networks community developed a variety of 
simple yet realistic mechanisms that explain the formation of social networks (e.g., the small-world network model \cite{watts1998collective}, and the preferential attachment model -PA- \cite{barabasi1999emergence}). In PA, newborns form connections to existing nodes with a probability proportional to the degree. 
This rich-get-richer phenomenon successfully reproduces the idea that popular users experience higher visibility, which in turns brings them more popularity. 
On the other hand, it gives little emphasis on the socio-economic microscopic foundations that explain why individuals make certain connections. 
Focusing on this alternative approach, one line of research in sociology (Stochastic Actor Oriented Models \cite{snijders1996stochastic}) and one in economics (strategic network formation models \cite{jackson2010social}) take an utilitarian perspective: agents build their connections to maximize some benefit e.g., their network centrality. 
The quality-based model of Pagan et al. \cite{pagan2021meritocratic} combines these approaches by using a UR or PA-based RSs and a utilitarian following decision-making function for users.
To enhance our understanding on the coupling between RS and human network behavior, we add a non-exploratory RS and investigate the fairness of the resulting outcomes.

\paragraph{Fairness.} Researchers are not only concerned with the average performances of processes, but also with the equity of these processes in impacting individuals. This is reflected by the extensive work on developing fairness measures as well as a methodology to choose the most suitable one depending on the application domain \cite{verma2018fairness, garg2020fairness, mitchell2021algorithmic}. From this various fairness metrics, we focus on \emph{individual fairness} which assesses the degree similarly qualified individuals receive similarly quality outcomes (see \cite{binns2020apparent} for an overview of its importance and apparent incompatibility with other fairness metrics). An important phenomenon is the \emph{timing effect}, according to which it is not enough to specify a welfare function but also when this should be measured (ex-ante or ex-post)\cite{myerson1981utilitarianism}. Building on these, we define and investigate both the ex-ante and ex-post individual fairness for CCs.

\paragraph{Recommender Systems.} Recently, the RS community argued for the importance of looking beyond accuracy \cite{mcnee2006being} in RS-evaluation. Design-wise, there is an ongoing struggle to develop diverse RSs \cite{kunaver2017diversity, helberger2018exposure, gravino2019towards} which, moreover, ensure that any two items could be recommended jointly to users \cite{guo2021stereotyping}. This perhaps explains why the popularity-based algorithms implemented within the RS community \cite{chaney2018algorithmic, lucherini2021t} differ from PA \cite{barabasi1999emergence, pagan2021meritocratic} by not allowing for a complete exploration of alternatives. Our motivation to analyse extreme PA is grounded in this prior work,
although our goal is to understand the network-formation process and its fairness.


\section{Model}
As mentioned in the previous two sections, our work is based on the quality-based model proposed by \cite{pagan2021meritocratic} and briefly presented below.
As any model, this is based on some simplifying assumptions, for which we refer the user back to the original paper. 
Note that our notation occasionally differs from \cite{pagan2021meritocratic} in order to allow for future extensions. The next two subsections contain our advancements of the model: (a) a RS-based formulation of the network formation process (plus the definition of the new RS, extreme PA), and (b) the fairness definitions.

We assume users can be partitioned into $n\geq 2$ content creators (CCs) and $m$ (non-CC) users, with $m \gg n$. We refer to CCs as $CC_1, CC_2, \dots, CC_n$ and assume they are ordered strictly by an existing objective quality. That is, $CC_1$ is the absolute best CC, $CC_2$ the next best CC, and so on. Every user can follow any content creator thus leading to a bipartite, unweighted, and directed network. We represent this network by its adjacency matrix $A\in \{0,1\}^{m\times n}$, such that (s.t.) $a_{u, i}$ is $1$ if user $u$ follows $CC_i$, and $0$ otherwise. 

The network formation is a sequential dynamic process where (a) the network is initially empty, and (b) at each timestep, each user is recommended a CC which they can follow or not. Formally, let $A^t$ capture the state of the network at time $t$. By assumption (a), $a^0_{u, i} = 0$ for each user $u\in \overline{m}$ and CC $i\in \overline{n}$. 
\footnote{Thought the paper we use $\overline{k}$ to denote the set of non-zero natural numbers that are at most equal to $k$, i.e. $\overline{k} := \{1, 2, \dots k\}$.}
The next subsection presents how recommendations are made and what are their effects.

\subsection{Recommendation Function}

We view recommendations as functions which map users to items, i.e., the recommendations at time $t$ are given by $R^t: \overline{m} \to \overline{n}$.
\footnote{Note that under full generality each user receives a \textit{list} of recommendations. However, allowing for a single recommendation per user makes our model comparable with  \cite{pagan2021meritocratic}.} The recommender system is the algorithm which produces such a recommender function based on the current state of the network. In this paper, we look at three such algorithms which we list below from the most to the least exploratory one:
\begin{itemize}
    \item \emph{Uniform Random (UR).} Each user can be recommended any CC with an equal probability, i.e., $R_{\text{UR}^t(u)} \sim \mathcal{U}(\overline{n})$.
    
    \item \emph{Preferential Attachment (PA).} Each content creator can be recommended with a chance proportional to their current number of followers. That is, if  $a_{., i} := \sum_{u\in \overline{m}} a_{u, i}$ is the number of followers of $CC_i$, then:
    $$\mathbb{P}(R^t_{\text{PA}}(u) = i) = \frac{a^t_{., i} +1}{ \sum_{j\in \overline{n}} (a^t_{., j} +1)}.$$ 
    
    \item \emph{ExtremePA.} An extreme version of preferential attachment which recommends only (one of) the most popular items:
    \[ 
    \mathbb{P}(R^t_{\text{ext}}(u) = i) = \left\{
    \begin{array}{ll}
          0 & \text{, if } a^t_{., i} < \max_j a^t_{., j},\\
          1/|\{i\text{, s.t. } i \in \arg \max_j a^t_{., j}\} | & \text{, otherwise.}\\
    \end{array} 
    \right. 
    \]
    
\end{itemize}

After each user $u$ receives their associated recommendation $R^t(u)$ they decide to follow $R^t(u)$ iff $R^t(u)$ is better (perhaps vacuously) than any followee of $u$:
\[ 
    a^{t+1}_{u, i} = \left\{
    \begin{array}{ll}
          1 & \text{, if } R^t(u) = i \text{ and } i<j\ \forall j\ \text{ s.t. } a^t_{u, j} = 1
          \\
          a^{t}_{u, i} & \text{, otherwise.}\\
    \end{array} 
    \right. 
    \]

\subsection{Metrics of Interest}
\label{sec:metrics}
Since the network at time $t+1$ only depends on the network at time $t$, we can view this network formation process as a Markov Chain (MC). More details will follow in Section~\ref{sec:theoretical_results}. With this interpretation, we first want to investigate (a) whether this is an absorbing MC (i.e., whether there exist some networks - absorbing states - which will eventually be reached and which will not change no matter how many timesteps proceed), and, if so, (b) the expected number of steps until absorption (i.e., until reaching such a state). 

However, as mentioned in the introduction, we are also interested in \emph{the fairness} of the overall process, both ex-ante and ex-post. Thus we define:
\begin{itemize}
    \item \emph{Ex-post individual fairness for CCs.} We say that a network $A$ is \textit{(individually) fair} if the number of followers of CCs have the same ranking as the CCs, i.e., if $a_{., 1} \geq a_{., 2} \geq \dots \geq a_{., n}$. We also say that $A$ is \textit{(individually) $CC_i$-fair} if $CC_i$ is one of the top $i$ ranked CCs, i.e. if $|\{j: a_{., j} > a_{., i}\}| < i$. In particular, this means that a network is fair iff it is $CC_i$-fair for all $i\in \overline{n}$.
    \footnote{Note that this is a weak version of the fairness definition. Alternatively, we can say a network $A$ is fair if $a_{., 1} > a_{., 2} > \dots > a_{., n}$. However, as we will see later in the results section, as the number of users goes to infinity the chance of achieving equality goes to zero. So, in the limit, the two definitions are equivalent.}
    
    \item \emph{Ex-ante individual fairness for CCs.} When this is an absorbing MC, we can also look at the ex-ante fairness of the network formation process. More precisely, we say that a process is \textit{ex-ante (individually) fair} if the expected number of followers of CCs at absorption is decreasing, i.e., if $\mathbb{E}[a^{\infty}_{., 1}] \geq \mathbb{E}[a^{\infty}_{., 2}] \geq \dots \geq \mathbb{E}[a^{\infty}_{., n}]$. Similarly, we say that a process is \textit{ex-ante (individually) $CC_i$-fair} if $|\{j: \mathbb{E}[a^{\infty}_{., i}] < \mathbb{E}[a^{\infty}_{., j}]\}| < i$.

\end{itemize}


\section{Results}
\label{sec:theoretical_results}
In our results section we investigate the metrics of interest introduced in Section~\ref{sec:metrics}. As such, subsection~\ref{sec:abs_MC} shows how this process can be viewed as a MC and what are the absorbing states. Subsection~\ref{sec:time_abs} looks at the expected time to absorption. Finally, in subsection~\ref{sec:fairness}, we build on the prior two and investigate fairness. Each of these subsections starts with (a) a summary paragraph which gives an overview of the results, and (b) a take-away paragraph which discusses the relevance of these results. The reminder of each subsection is used for the results themselves. In the interest of space, in a few of the later proofs we omit details and only provide the proof's outline and intuition. 

\subsection{An Absorbing Markov Chain}
\label{sec:abs_MC}
\emph{Summary.} This subsection starts by proving that our process is an absorbing MC (see Theorem~\ref{thm:MC}). During the proof we also define the transition matrix. Fig.~\ref{fig:MC_small} illustrates this process for the small case of two CCs and two (non-CC) users. The remaining results characterize the absorbing states. First, Theorem~\ref{thm:absorbing_PA} shows that under exploratory recommendation processes (i.e., PA or UR) the absorbing states are the ones where each user follows the best CC. Second, Theorem \ref{thm:absorbing_extreme} shows that, under ExtremePA, a state reachable from the $\textbf{0}$ state is absorbing if (a) there is a unique CC with the maximum number number of followers and (b) no new user would like to follow this CC. For the latter, we use Lemma~\ref{lem:ExtremePA_states}, which states that any state reachable from \textbf{0} has, for each $CC_i$ with the maximum number of followers, a user who does not follow any CC better than $CC_i$.

\emph{Take-away.} The results of this subsection build a representation for the process which facilitates its understanding. Importantly, by describing the transition matrix we see the differences between the three RSs: (a) ExtremePA has more absorbing states, (b) out of these, the states where $CC_1$ is the most followed CC (i.e., the $CC_1$-fair ones) are the absorbing states under the exploratory RSs, (c) PA situates itself between ExtremePA and UR in terms of exploration. By the latter we mean that, when compared to UR, PA could remain for longer times in transient states (which are fairness-wise similar to the states which are absorbing only under ExtremePA). 



\begin{figure}%
    \centering
    \includegraphics[width=\textwidth]{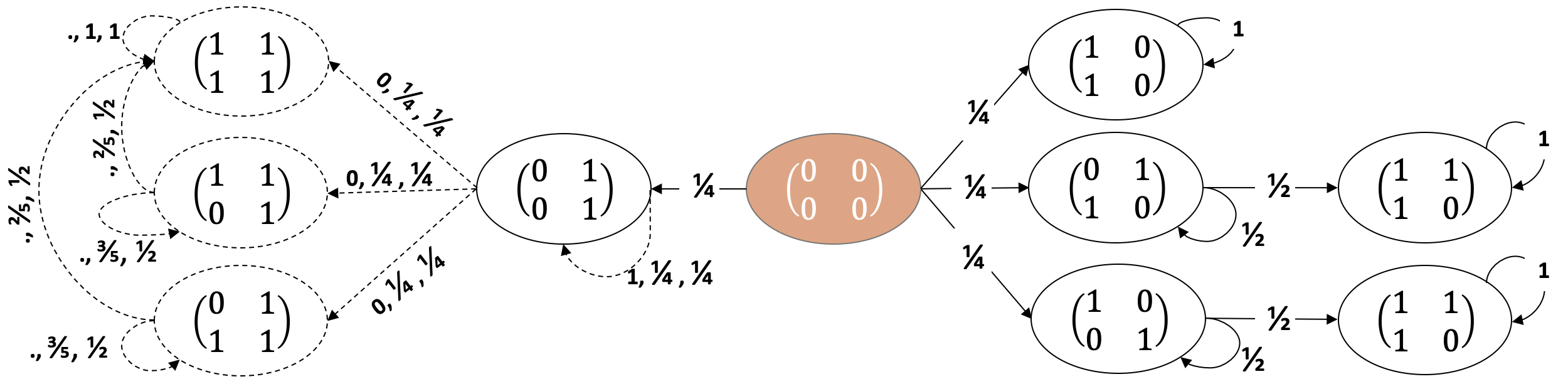}
    \caption{An example of the MC representation for $n = m = 2$. The coloured node is the starting state. We use (a) full edges for transitions that are the same under the three RSs, and (b) dotted edges for transitions that differ (labeled with the respective probabilities for (1) ExtremePA, (2) PA, (3) UR). Dots replace probabilities when the starting state is not reachable from \textbf{0} under ExtremePA.
    }
    \label{fig:MC_small}
\end{figure}

\begin{theorem}
\label{thm:MC}
$(A^t)_{t\geq 0}$ is an absorbing MC (for all three RSs).
\end{theorem}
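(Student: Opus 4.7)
The plan is to verify the two defining properties of an absorbing Markov chain: the Markov property, and that from every state there is a positive-probability path of finite length to some absorbing state.

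For the Markov property, I would first observe that under each of the three RSs the conditional distribution of $R^t$ depends only on the current follower counts, hence only on $A^t$; since the follow rule that assembles $A^{t+1}$ from $A^t$ and $R^t$ is deterministic, the one-step transition law depends only on $A^t$.

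Next I would exhibit an absorbing state together with a monotone bounded potential. Let $A^*$ be the state in which every user follows $CC_1$ and only $CC_1$, and set $\Phi(A) := \sum_{u,i} a_{u,i}$. The follow rule only ever flips entries from $0$ to $1$, so $\Phi(A^t)$ is nondecreasing along every trajectory and bounded above by $mn$. One checks that $A^*$ is absorbing under all three RSs: no recommendation can strictly improve upon $CC_1$; and under ExtremePA the set $\arg\max_j a^*_{.,j}$ equals $\{1\}$, so the only CC ever recommended is $CC_1$, which every user already follows. For reachability under UR and PA, every CC has strictly positive recommendation probability in every state (for PA thanks to the $+1$ smoothing), so from any $A$ there is positive probability that in a single step every user not yet following $CC_1$ is simultaneously recommended $CC_1$, yielding $A^*$.

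The main obstacle is ExtremePA, whose recommendation support is restricted to the most-followed CCs so one cannot jump directly to $A^*$. I would handle it via the following observation: if $A$ is not absorbing under ExtremePA then, by definition of the transition, there must exist a user $u$ and a $CC_i \in \arg\max_j a_{.,j}$ such that $a_{u,i}=0$ and no $j<i$ has $a_{u,j}=1$, for otherwise no realization of $R^t$ supported on $\arg\max_j a_{.,j}$ could flip any entry. With strictly positive probability ExtremePA recommends $CC_i$ precisely to that $u$, who then follows $CC_i$ and strictly increases $\Phi$. Since $\Phi$ is bounded by $mn$, iterating this at most $mn - \Phi(A)$ times produces a positive-probability path from $A$ to an absorbing state, completing the verification that $(A^t)$ is an absorbing Markov chain under all three RSs.
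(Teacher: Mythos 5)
Your argument is correct and rests on the same core idea as the paper's proof: the edge count $\Phi(A)=\sum_{u,i} a_{u,i}$ is nondecreasing along every trajectory and bounded by $mn$, so every non-absorbing state has a positive-probability transition that strictly increases it, forcing absorption in finitely many steps. One minor imprecision: under UR/PA, simultaneously recommending $CC_1$ to everyone from an arbitrary state $A$ does not yield your $A^*$ (users keep their previously formed followee edges), but the resulting state, in which every user follows $CC_1$, is still absorbing, so the argument goes through unchanged.
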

\begin{proof}
First, $(A^t)_{t\geq 0}$ is a MC, as $A^{t+1}$ depends only on the network configuration at the previous step. More precisely, it is a MC where: (a) the state space is $\{0, 1\}^{m\times n}$, (b) the initial distribution is $\lambda$ where $\lambda_{A} = 1$ iff $A$ is the zero matrix, and (c) the transition matrix given by $p_{B, C} := \mathbb{P}(A^{t+1} = C|A^t = B)$, where:
\begin{itemize}
        \item from the zero matrix we can only transit to a matrix where each user follows exactly one CC, i.e.:
        $
            p_{\textbf{0}, C} = \left\{
            \begin{array}{ll}
                  1/{n^m} & \text{, if } c_{u, .} = 1 \ \forall \ u \in \overline{m} \\
                  0 & \text{, otherwise}\\
            \end{array} 
            \right. 
            $;
        
        \item from any other matrix we can only transit to a new matrix where each user either (a) follows the same CCs as before, or (b) follows exactly one more CC which is better than the best CC they followed so far; the probabilities of such transitions depend on the recommendation process. 
\end{itemize}

Moreover, from the shape of the transition matrix it follows that this process is an absorbing MC. To see this, note that $p_{B, C}$ is non-zero iff $C=B$ or there is some user $u$ who follows one more CC, i.e. $c_{u, .} > b_{u, .}$. So, any state $B$ is either absorbing, or can transit to a state of a strictly higher sum of elements. Since the sum of elements of $B$ is bounded by $m \cdot n$, such a sequence of transitions must be finite. Thus, in this latter case, an absorbing state will be eventually reached through a sequence of transitions. \hfill$\qed$
\end{proof}


\begin{theorem}
\label{thm:absorbing_PA}
Under PA and UR recommendations, a state $B$ is absorbing iff all users follow the best CC, i.e. iff $b_{u, 1} = 1$ for all $u\in \overline{m}$.
\end{theorem}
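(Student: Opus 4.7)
The plan is to prove both directions of the iff by using the crucial property distinguishing PA and UR from ExtremePA: under both PA and UR, every CC is recommended to every user with strictly positive probability at every timestep (for PA this follows from the $+1$ in the numerator, and for UR this is immediate).

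For the easy direction ($\Leftarrow$), I will assume $b_{u,1} = 1$ for every user $u$ and show that no transition away from $B$ is possible. Pick any user $u$ and any recommendation $R^t(u) = i$. If $i = 1$, then $a^{t+1}_{u,1} = 1 = a^t_{u,1}$, so nothing changes. If $i > 1$, then the follow condition $i < j$ for every $j$ with $a^t_{u,j} = 1$ fails, because $j = 1$ is one such index and $1 < i$. Hence $a^{t+1}_{u, \cdot} = a^t_{u, \cdot}$ for every user, so $p_{B,B} = 1$ and $B$ is absorbing.

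For the contrapositive of the other direction ($\Rightarrow$), I will suppose there is some user $u$ with $b_{u,1} = 0$ and exhibit a state $C \neq B$ with $p_{B,C} > 0$. Since $CC_1$ is the best CC and $u$ does not follow it, the condition $1 < j$ holds for every $j$ with $b_{u,j} = 1$ (vacuously if $u$ has no followees). Thus if $CC_1$ is recommended to $u$ in this step, then $u$ follows $CC_1$, producing a new network $C$ in which $u$'s row has gained a $1$ in column $1$. The probability that $u$ is recommended $CC_1$ is $1/n > 0$ under UR and $(b_{\cdot,1} + 1) / \sum_j (b_{\cdot,j} + 1) > 0$ under PA. Multiplying over the other users' (always positive) probabilities of any particular outcome, we obtain $p_{B,C} > 0$ for some $C \neq B$, so $B$ is not absorbing.

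The argument is essentially bookkeeping once the key observation is isolated, so there is no significant obstacle; the only subtle point is to be explicit that under PA and UR every CC has positive recommendation probability regardless of the current degree distribution, which is precisely what fails for ExtremePA and forces the separate treatment in Theorem~\ref{thm:absorbing_extreme}.
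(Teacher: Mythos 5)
Your proposal is correct and follows essentially the same route as the paper's proof: the forward direction argues by contraposition that a user not following $CC_1$ will be recommended $CC_1$ with positive probability (which holds for UR and, thanks to the $+1$ in the numerator, for PA) and will then follow it, while the reverse direction notes that a user already following $CC_1$ never changes their followees. Your version merely spells out the bookkeeping (the follow condition and the explicit positive probabilities) in slightly more detail.
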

\begin{proof}
($\Rightarrow$) 
Assume there exist some user $u\in \overline{m}$ s.t. $b_{u, 1} = 0$. Under PA and UR recommendations there is always a non-zero chance $u$ is recommended $CC_1$. If this happens, then $u$ follows $CC_{1}$. So we can transit to a state $C\neq B$ where $c_{u, 1} = 1$ with a non-zero probability. Hence, $B$ is not absorbing. 

($\Leftarrow$) The reverse is straightforward. If $b_{u, 1} = 1$, then no recommendation will change the followees of $u$. This hods for all users $u$, so 
$B$ is absorbing.
\hfill$\qed$
\end{proof}


\begin{lemma}
\label{lem:ExtremePA_states}
For any state $B$ reachable from \textbf{0} and any $i\in \overline{n}$ s.t. $b_{., i} = \max_j b_{., j}$ there exits some user $u \in \overline{m}$ s.t. $b_{u, j} = 0$ for all $j<i$.
\end{lemma}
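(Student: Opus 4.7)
The plan is to induct on $t \geq 0$ along a trajectory $\mathbf{0} = B^0 \to B^1 \to \cdots \to B^t = B$ realizing $B$ as reachable from $\mathbf{0}$ under ExtremePA, and to carry the full statement of the lemma as the inductive invariant. The base case $t = 0$ is immediate: every $CC_i$ ties for the maximum at $\mathbf{0}$, and any user $u$ trivially witnesses the claim because all entries of $\mathbf{0}$ are $0$.

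For the step from $B := B^t$ to $B' := B^{t+1}$, I would write $M := \max_j b_{.,j}$ and $S := \arg\max_j b_{.,j}$, with $M'$, $S'$ defined analogously for $B'$. Two structural observations drive the argument: under ExtremePA only CCs in $S$ can be recommended at state $B$, so any new edge formed in this step points to some $CC_i$ with $i \in S$; and follower counts are monotone non-decreasing, so $b'_{.,j} \geq b_{.,j}$ for every $j$ and hence $M' \geq M$. If nobody followed their recommendation then $B' = B$ and the invariant transfers directly; otherwise some $CC_i \in S$ strictly gained followers, which forces $M' > M$.

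The main step is then to note that when $M' > M$, every $i \in S'$ must itself have strictly gained followers during the transition: if $b'_{.,i} = b_{.,i}$, then $b'_{.,i} = b_{.,i} \leq M < M' = b'_{.,i}$, a contradiction. For each $i \in S'$ I would pick any user $u$ who freshly followed $CC_i$ in this step; the following rule forces $u$ to have had no followee with index $\leq i$ in $B$, i.e.\ $b_{u,j} = 0$ for all $j \leq i$. Since the only new edge added for $u$ in the step is the one to $CC_i$, this yields $b'_{u,j} = 0$ for all $j < i$, which is exactly the witness property required for $i$ in $B'$.

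The main obstacle I anticipate is identifying the right invariant to carry: a witness for one specific $i$ need not persist over time, since that $i$ may stop being maximal while some other $CC_{i'}$ newly becomes maximal. The clean way around this is to strengthen the induction hypothesis to the full lemma and exploit the containment $S' \subseteq S$ (which follows from observation (i) and monotonicity), because this is precisely what allows one to draw fresh witnesses for each new maximum from the users who just formed an edge during the transition.
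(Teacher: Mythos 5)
Your proof is correct and follows essentially the same route as the paper's: an induction along the trajectory from $\textbf{0}$, where any CC newly attaining the maximum must have strictly gained a follower in that transition, and the following rule makes that fresh follower the required witness with no better followee. The paper's version is just terser --- it starts the induction after the first round (where every user follows exactly one CC) and leaves the $M'>M$ and $S'\subseteq S$ bookkeeping implicit --- but the core argument is identical.
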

\begin{proof}
We use an inductive argument. The claim is obviously true for any state directly reachable from \textbf{0} (i.e., any network achievable in one timestep), as, in such states, each user follows exactly one CC. Next, we assume the claim is true for some state $B$. From this state, we either remain in $B$, or we transit to a state $C$ where a subset of these CCs have a higher maximum number of followers. As each $CC_i$ in this subset increased their number of followers they needed to be recommended to a user $u$ that preferred them to any CC they previously followed. As such, $CC_i$ is the best CC followed by $u$ in $C$, i.e., $i = \min \{j: c_{u, j} = 1\}$.
\hfill$\qed$
\end{proof}

\begin{theorem}
\label{thm:absorbing_extreme}
Under Extreme PA,
all absorbing states $B$ reachable from \textbf{0} are such that (a) there exist some $i \in \overline{n}$ s.t. $b_{., i}> b_{., j}$ for all $j \in \overline{n}-\{i\}$, and (b) for all $u\in \overline{m}$ there exist some $j\leq i$ s.t. $b_{u, j} = 1$. Moreover, every state which satisfies (a) and (b) is absorbing.
\end{theorem}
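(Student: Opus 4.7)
The plan is to prove the two directions of the biconditional separately. The key observation, once Lemma~\ref{lem:ExtremePA_states} is in hand, is that under ExtremePA a state $B$ is absorbing if and only if recommending any of the currently-maximum CCs to any user leaves the adjacency matrix unchanged, so I only have to analyse which CCs are eligible for recommendation and whether any user would accept them.

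For the forward direction (absorbing and reachable from $\textbf{0}$ implies (a) and (b)), I would establish the contrapositive for each condition. Suppose (a) fails, so at least two CCs are tied at the maximum; let $i'$ be the largest-indexed tied CC and pick another tied $i < i'$. Applying Lemma~\ref{lem:ExtremePA_states} to $i'$ produces a user $u$ whose followees, if any, all have index $\geq i'$. Under ExtremePA the index $i$ is recommended to $u$ with positive probability, and since $i < i' \leq j$ for every followee index $j$ of $u$, user $u$ would accept $CC_i$, producing a genuine transition and contradicting absorbingness. If instead (a) holds with unique maximum $i$ but (b) fails at some user $u$, then all of $u$'s followees (if any) have index $> i$; since ExtremePA then recommends $CC_i$ with probability one, $u$ again accepts it and the state changes, contradicting absorbingness.

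For the reverse direction (conditions (a) and (b) imply absorbing), assume (a) and (b) hold with unique maximum index $i$. Under ExtremePA only $CC_i$ can be recommended, so the only possible transition out of $B$ would require some user to adopt $CC_i$. By (b) every user $u$ already follows some $CC_j$ with $j \leq i$: if $j = i$ the recommendation is a no-op, and if $j < i$ the user rejects $CC_i$ because they already follow a strictly better CC. Hence no user updates and $B$ is absorbing; note that this half does not rely on reachability from $\textbf{0}$, matching the wording of the theorem.

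The main subtlety is ensuring that the user $u$ furnished by Lemma~\ref{lem:ExtremePA_states} in the failure-of-(a) argument actually accepts the recommendation we use to break absorbingness. The lemma only guarantees that $u$ does not follow any CC with index strictly less than $i'$, leaving open the possibility that $u$ already follows $i'$ itself, so recommending $i'$ would not suffice. Choosing the strictly smaller tied index $i$ as the recommendation sidesteps this, since $i$ is then strictly better than every possible followee of $u$ and acceptance is guaranteed.
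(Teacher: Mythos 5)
Your proposal is correct and follows essentially the same route as the paper: contrapositive on each of (a) and (b) using Lemma~\ref{lem:ExtremePA_states} for the failure of (a), and a direct argument for the converse. Your explicit handling of the subtlety — applying the lemma to the larger-indexed tied CC and recommending the strictly smaller tied one so acceptance is guaranteed — is exactly what the paper's terser wording relies on implicitly.
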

\begin{proof}
For the first part, 
let $B$ be an absorbing state reachable from \textbf{0}.
If (a) is not the case then there are at least two CCs, say $CC_i$ and $CC_j$ with $i<j$, that have the maximum number of followers. By Lemma~\ref{lem:ExtremePA_states}, there exists some user $u$ who does not follow $CC_i$ yet. Therefore, if everybody is recommended $CC_i$, only the number of followers of $CC_i$ will increase by at least one (as $u$ will follow them). There is a non-zero probability of this happening, and, in such a case, a new state is reached. Therefore, $B$ is not an absorbing state if (a) does not hold. Next, if (a) holds but (b) does not then in the next round (when the unique most followed $CC_i$ is recommended to everybody) new users will follow $CC_i$. So, again, if (a) holds but (b) does not then $B$ cannot be absorbing. 

The second part is straightforward. If (a) holds then only the unique CC with the highest number of followers can be recommended in the next round. Since (b) holds, no new user will follow this CC. So, there is no chance of transiting to a new state, i.e. the current state is absorbing. \hfill$\qed$
\end{proof}


\subsection{Expected Time to Absorption}
\label{sec:time_abs}
\emph{Summary.} Theorem~\ref{thm:time_abs} shows that as the number of users goes to infinity, the expected time to absorption under ExtremePA is less than $2$. Two preliminary results are needed to prove this statement: (a) the chance of having ties in the number of followers after the first round of recommendations (Lemma~\ref{lem:eq->0}) and (b) the chance that in the second round no one new would follow the most followed $CC_i$ (if $i\neq n$, see Lemma~\ref{lem:prob_converged}) both go to $0$ as the number of users goes to infinity
\footnote{Note that the two preliminary results are independent on the RS.}.
These results are used for the annotations in Fig.~\ref{fig:MC_extreme}, which summarizes the process and provides the intuition for the proof of the theorem. 

\emph{Take-away.} This puts ExtremePA in sharp contrast with the exploratory RSs, as prior work indicates that, under the UR and PA recommendation scenarios, the convergence time increases logarithmically in the number of CCs and linearly (or sub-linearly) in the number of users
(see Figs.~2 and 7b of \cite{pagan2021meritocratic}). Importantly, it indicates that, while  for ExtremePA it could be sufficient to analyse the fairness in the absorbing states alone, this might not be the case for PA and UR: as these RSs could lead, in practice, to long convergence times the fairness in the transient states should also be evaluated. This is particularly true for PA, which, as discussed earlier, could remain in unfair states for longer than UR.



\begin{lemma}
\label{lem:eq->0}
For any $i\neq j \in \overline{n}$, $\mathbb{P}(a^1_{., i} = a^1_{., j}) \to 0$ as $m\to \infty$.
\end{lemma}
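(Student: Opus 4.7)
The plan is to reduce the claim to a concentration statement for a symmetric binomial. The starting observation is that after the first round every user follows exactly one CC, and this CC is uniform on $\overline n$ for \emph{all three} RSs: UR is uniform by definition; for PA the $+1$ smoothing with $a^0_{.,\cdot}=0$ gives probability $1/n$ per CC; and for ExtremePA all CCs are tied in the argmax, so each is chosen with probability $1/n$. Hence $(a^1_{.,1},\dots,a^1_{.,n})$ is $\mathrm{Multinomial}(m,(1/n,\dots,1/n))$, which is why the lemma is RS-independent.

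Fix $i\ne j$ and set $X := a^1_{.,i}$, $Y := a^1_{.,j}$, $K := X+Y$. Then $K\sim \mathrm{Bin}(m,2/n)$, and conditional on $K=k$ the variable $X$ is $\mathrm{Bin}(k,1/2)$. Only even values of $K$ permit $X=Y$, so
\[
\mathbb{P}(X=Y) \;=\; \sum_{\ell=0}^{\lfloor m/2\rfloor} \mathbb{P}(K=2\ell)\,\binom{2\ell}{\ell}2^{-2\ell}.
\]
The key estimate is Stirling's formula: $\binom{2\ell}{\ell}2^{-2\ell}=\Theta(1/\sqrt{\ell})$ for $\ell\ge 1$.

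To finish I would split the sum at $2\ell = m/n$ (half the mean of $K$). On the upper range $\{2\ell\ge m/n\}$ the central binomial factor is bounded by $C/\sqrt{m/n}=O(\sqrt{n/m})$, so, summing probabilities, this contribution is $O(1/\sqrt{m})$. On the lower range $\{2\ell<m/n\}$, Chebyshev (or a Chernoff bound) applied to $\mathrm{Bin}(m,2/n)$ shows $\mathbb{P}(K<m/n)\to 0$ as $m\to\infty$ with $n$ fixed. Both pieces vanish, which yields the claim.

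The only mildly delicate point is the split: we must apply the Stirling bound only in the regime where $\ell$ grows with $m$, and handle the small-$\ell$ regime via a concentration argument for $K$. Both ingredients are elementary, so I expect no substantive obstacle beyond this bookkeeping.
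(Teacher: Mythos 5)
Your proof is correct, but it takes a genuinely different route from the paper. The paper defines signed indicators $Y_u\in\{-1,0,+1\}$ (according to whether user $u$ is recommended $CC_j$, neither, or $CC_i$), notes they are i.i.d.\ with mean $0$ and variance $2/n$, applies the central limit theorem to $\sum_u Y_u = a^1_{.,i}-a^1_{.,j}$, and concludes that the point mass at $0$ must vanish because the rescaled sum converges to a continuous (Gaussian) limit. You instead condition on $K=a^1_{.,i}+a^1_{.,j}\sim\mathrm{Bin}(m,2/n)$, use the exact local estimate $\binom{2\ell}{\ell}2^{-2\ell}=\Theta(\ell^{-1/2})$ for the conditional tie probability, and split the sum according to whether $K$ is below or above half its mean, killing the two ranges with Stirling and with a Chernoff/Chebyshev bound respectively. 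Both arguments are sound; your version is more elementary (no appeal to the CLT) and, unlike the paper's, delivers an explicit rate $\mathbb{P}(a^1_{.,i}=a^1_{.,j})=O(\sqrt{n/m})$, at the cost of slightly more bookkeeping. Your preliminary observation that the first-round recommendation is uniform on $\overline{n}$ for all three RSs (so the lemma is RS-independent) matches what the paper relegates to a footnote, and your implicit assumption that recommendations are independent across users is the same one the paper makes when declaring the $Y_u$ i.i.d.
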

\begin{proof}
First, let $Y_u$ be the following random variables depending on the recommendations in the first round:
\[ 
    Y_u = \left\{
    \begin{array}{ll}
          0 & \text{ , if user } u \text{ is recommended some } CC_k \text{ with } k \neq i, j; \\
          1 & \text{ , if user } u \text{ is recommended } CC_i;\\
          -1 & \text{ , if user } u \text{ is recommended } CC_j.
    \end{array} 
    \right. 
    \]
Since the first recommendations are uniform random, $(Y_u)_{u\in \overline{m}}$ are i.i.d. (with $\mathbb{P}(Y_u = 0) = (n-2)/n$ and $\mathbb{P}(Y_u = c) = 1/n$  when $c = \pm 1$). Hence, $\mathbb{E}[Y_u] = 0$ and $\text{Var}(Y_u)  = \mathbb{E}[Y_u^2] = 2/n$. By the central limit theorem it follows that,
$$\frac{\sum_{u\in \overline{m}} Y_u - m\cdot 0}{2/n \cdot \sqrt{m}} \to^{\mathcal{D}} \mathcal{N}(0, 1).$$

In particular, this implies that $\lim_{m\to \infty} \mathbb{P}(a^1_{., i} = a^1_{., j}) = 0$.
\hfill$\qed$
\end{proof}

\begin{lemma}
\label{lem:prob_converged}
For any $i \in \overline{n-1}$, $\mathbb{P}( (a^1_{., i} > a^1_{.,j}\ \forall j\in \overline{n}) \wedge (\forall u \in \overline{m},\  \exists j \in \overline{i} \mbox{ s.t.}\ a^1_{u, j} = 1)) \to 0$ as $m\to \infty$.
\end{lemma}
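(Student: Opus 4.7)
The plan is to observe that conjunct~(b) alone is already extremely restrictive, so it suffices to bound its probability and use $\mathbb{P}(a\wedge b) \leq \mathbb{P}(b)$.

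First I would note that, since the network starts empty, after the first round each user $u$ follows exactly the single CC they were recommended (any recommendation is vacuously better than all current followees, which there are none of). Thus, in the state $A^1$, we have $a^1_{u,.} = 1$ for every $u \in \overline{m}$, and $a^1_{u, j} = 1$ precisely when $R^0(u) = j$.

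Next I would translate condition (b) into a statement about recommendations: the event $\{\forall u \in \overline{m},\ \exists j \in \overline{i}\ \text{s.t.}\ a^1_{u,j}=1\}$ is exactly $\{R^0(u) \in \overline{i}\ \forall u \in \overline{m}\}$. Since the first-round recommendations are drawn i.i.d.\ from $\mathcal{U}(\overline{n})$ (all three RSs coincide with UR in the first round because the network is empty and all $a^0_{.,j}+1$ are equal), the probability that a single user is recommended one of $CC_1, \ldots, CC_i$ is $i/n$, and the events are independent across users, giving
\[
\mathbb{P}\bigl(\forall u \in \overline{m},\ \exists j \in \overline{i}\ \text{s.t.}\ a^1_{u,j}=1\bigr) = \left(\frac{i}{n}\right)^m.
\]

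Finally, since $i \in \overline{n-1}$ we have $i/n < 1$, so $(i/n)^m \to 0$ as $m \to \infty$. The joint probability appearing in the lemma statement is bounded above by this quantity, hence it too tends to $0$. There is really no substantive obstacle here: the only thing to be careful about is observing that requiring \emph{every} user to land on one of the first $i<n$ CCs under i.i.d.\ uniform recommendations is an event whose probability decays exponentially in $m$, so conjunct~(a) need not be analyzed at all.
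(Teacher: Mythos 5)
Your proposal is correct and follows exactly the paper's own argument: bound the joint probability by the probability of the second conjunct, identify that event with all first-round recommendations landing in $\overline{i}$, and conclude with $(i/n)^m \to 0$. The additional justification you give (first-round recommendations are i.i.d.\ uniform under all three RSs because the network is empty) is a detail the paper leaves implicit but is consistent with it.
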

\begin{proof} This follows as the latter part of the conjunction corresponds to the scenario of all users being recommended one of the top $i$ CCs in the first round. So, $\mathbb{P}( (a^1_{., i} > a^1_{.,j}\ \forall j\in \overline{n}) \wedge (\forall u \in \overline{m},\  \exists j \in \overline{i}\ a^1_{u, j} = 1)) \leq \mathbb{P}(\forall u \in \overline{m},\  \exists j \in \overline{i}\ a^1_{u, j} = 1)= \left( \frac{i}{n} \right)^m \to 0$ as $m\to \infty$. \hfill$\qed$
\end{proof}

\begin{figure}%
    \centering
    \includegraphics[width=\textwidth]{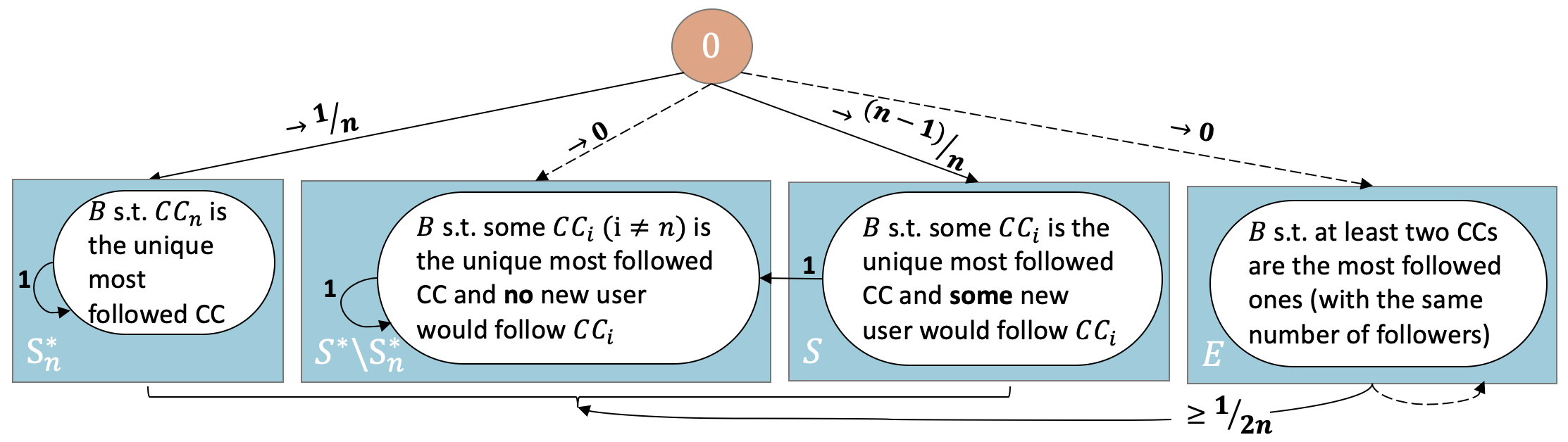}
    \caption{A summary of the MC for ExtremePA as $m\to \infty$. Rectangles represent sets of states. Transition probabilities are annotated in the limit. Dotted lines correspond to transitions who are negligible as $m\to \infty$.}
    \label{fig:MC_extreme}
\end{figure}

\begin{theorem}
\label{thm:time_abs}
Under ExtremePA, the expected time to absorption goes to $2-1/n$ as $m$ goes to infinity.
\end{theorem}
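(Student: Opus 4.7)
The plan is to perform a first-step analysis, partitioning the states reachable at $t=1$ into the three groups highlighted in Fig.~\ref{fig:MC_extreme}. At $t=0$ all $n$ CCs are tied at zero followers, so ExtremePA reduces to UR: every user independently follows a single uniformly chosen CC in the first round. The three groups are (S1) states with a unique maximum $CC_i$ for which every user follows some CC in $\{CC_1,\dots,CC_i\}$ (absorbing by Theorem~\ref{thm:absorbing_extreme}), (S2) states with a unique maximum $CC_i$ but at least one user whose sole followee has index $k>i$ (non-absorbing), and (S3) states with several CCs tied at the maximum (non-absorbing).

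By Lemma~\ref{lem:eq->0} applied across all pairs, $\mathbb{P}(\text{S3})\to 0$; by symmetry of UR, conditional on no ties the unique max is each $CC_j$ with equal probability $1/n$; by Lemma~\ref{lem:prob_converged}, for $i<n$ the probability of being in (S1) with unique max $CC_i$ vanishes, whereas for $i=n$ condition (b) of Theorem~\ref{thm:absorbing_extreme} is automatic. Hence $\mathbb{P}(\text{S1})\to 1/n$ and $\mathbb{P}(\text{S2})\to 1-1/n$. From any (S2) state with unique max $CC_i$, round $2$ of ExtremePA deterministically recommends $CC_i$ to every user, so every user previously following only some $CC_k$ with $k>i$ now also follows $CC_i$; the resulting state has $CC_i$ still as the strict maximum (its count strictly increased) while every user follows some CC in $\{CC_1,\dots,CC_i\}$, hence is absorbing at $t=2$ by Theorem~\ref{thm:absorbing_extreme}. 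Combining, the contribution of $(S1)\cup(S2)$ to the expected absorption time converges to $1\cdot(1/n)+2\cdot(1-1/n)=2-1/n$.

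The main obstacle is controlling the residual term coming from (S3). Since $\mathbb{P}(\text{S3})\to 0$, it suffices to show that the expected remaining absorption time starting from a tie state does not grow too fast with $m$. I would handle it by iterating the concentration argument behind Lemma~\ref{lem:eq->0}: from a tie state with tied set $S=\{i_1<\dots<i_k\}$, the best tied creator $CC_{i_1}$ is the only one that can attract, in round 2, users currently following $CC_{i_2},\dots,CC_{i_k}$, so its expected follower-count increment exceeds that of any other $CC_{i_l}$ by order $m/(nk)$; a CLT-type estimate then shows that the tie is resolved at $t=2$ with probability $1-o(1)$, after which the (S2) analysis yields absorption at $t=3$. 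A quantitative decay for the probability of ties persisting, combined with $\mathbb{P}(\text{S3})=o(1)$, makes this residual contribution negligible and yields the claimed limit $2-1/n$.
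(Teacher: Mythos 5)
Your proposal is correct and follows essentially the same route as the paper: the same first-step decomposition into absorbing states, one-step-from-absorbing states, and tie states, with Lemmas~\ref{lem:eq->0} and \ref{lem:prob_converged} supplying the limiting transition probabilities and the $1\cdot(1/n)+2\cdot(1-1/n)$ computation. The only difference is in dispatching the vanishing tie-state contribution, where the paper avoids your CLT-based tie-resolution argument by using Lemma~\ref{lem:ExtremePA_states} to bound the expected remaining time from any tie state by the constant $4n/(n-1)$, which is somewhat simpler but yields the same conclusion.
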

\begin{proof}
Based on Theorems~\ref{thm:MC} and \ref{thm:absorbing_extreme}, we can group states in subsets, as shown in Fig.~\ref{fig:MC_extreme}. Let $\mu_B$ be the expected time from the state $B$ to absorption. Then $\mu_{B: B\in S^*} = 0$ (as all states in $S^*$ are absorbing) and $\mu_{B: B\in S} = 1$ (as all states in $S$ lead in one timestep to a state in $S^*$). Therefore,
$$\mu_{\textbf{0}} = 1\cdot p_{\textbf{0}, S_n^*} + 1\cdot p_{\textbf{0}, S-S_n^*} + 2\cdot p_{\textbf{0}, S} + \sum_{B\in E} (1 + \mu_B) \cdot p_{\textbf{0}, B} 
$$

We can use Lemmas~\ref{lem:eq->0} and \ref{lem:prob_converged} to find the transition probabilities between the sets of states as $m\to \infty$: (a) $p_{\textbf{0}, S^*_n}\to \frac{1}{n}$, (b) $p_{\textbf{0}, S^*-S^*_n}\to 0$, (c) $p_{\textbf{0}, S}\to \frac{n-1}{n}$, and (d) $p_{\textbf{0}, E} \to 0$. Since, in addition, $\sum_{B\in E} p_{\textbf{0}, B} \cdot (1 + \mu_B) \leq p_{\textbf{0}, E}\cdot (1+ \mu_{B^*})$ (where $B^* = \arg \max_{B\in E} \mu_B$) and $\mu_{B^*}\leq c$ ($c$ constant)
\footnote{This can be easily shown for $c= 4n/(n-1)$ by using Lemma~\ref{lem:ExtremePA_states} to prove that $p_{B^*, S\cup S^*}\geq 1/2n$ (although better bounds can be obtained).}
, the result follows. 
\hfill$\qed$
\end{proof}

\subsection{Fairness for Content Creators}
\label{sec:fairness}
\emph{Summary.} Building on previous results, we show that the exploratory RSs are both ex-ante and ex-post $CC_1$-fair (Corollary~\ref{cor:CC1_fair_PA}), while ExtremePA is only ex-ante fair (Theorem~\ref{thm:ex_ante_ExtremePA}). In fact, Corollary~\ref{cor:CC1_fair_ExtremePA} shows the probability of achieving a $CC_1$-fair absorbing state under ExtremePA goes to $1/n$ as the number of users goes to infinity, while Corollary~\ref{cor:CC_fair_ExtremePA} shows that the probability of achieving a fair outcome for all CCs goes to $1/n!$. Finally, Fig.~\ref{fig.fairness} depicts the probability of achieving a $CC_i$ fair outcome for each CC under any of the three RSs.

\emph{Take-away.} This analysis carries several important messages. First, it shows that exploration is key in achieving $CC_1$-fair outcomes. Second, although ExtremePA is ex-ante fair it is rarely ex-post fair, thus underlying the importance of looking beyond the number of followers in expectation. Third, our numerical analysis reveals that although exploration always leads to a $CC_1$-fair outcome, the fairness for the other CCs is not guaranteed. However, it suggests that exploratory RSs distribute fairness more homogeneously compared to ExtremePA.

\begin{corollary}
\label{cor:CC1_fair_PA}
Under UR and PA recommendations, $(A^t)_t$ is both ex-post and ex-ante $CC_1$-fair.
\end{corollary}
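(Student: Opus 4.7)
The plan is to observe that this corollary is essentially an immediate consequence of Theorem~\ref{thm:absorbing_PA} combined with the definitions of ex-post and ex-ante $CC_1$-fairness from Section~\ref{sec:metrics}. The only substantive fact needed is the structural characterization of absorbing states under UR and PA, which we already have; everything else is a counting check.

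First, I would invoke Theorem~\ref{thm:MC} to ensure that the process is an absorbing MC (so that $A^{\infty}$ is a well-defined almost-sure limit), and then Theorem~\ref{thm:absorbing_PA} to conclude that every absorbing state $B$ satisfies $b_{u,1}=1$ for every $u\in\overline{m}$, hence $b_{.,1}=m$. Since $b_{.,j}\leq m$ trivially for every $j\in\overline{n}$, we get $b_{.,j}\leq b_{.,1}$ for all $j$, so $|\{j:b_{.,j}>b_{.,1}\}|=0<1$ and $B$ is $CC_1$-fair. Because this holds for every absorbing state, it holds for $A^{\infty}$ almost surely, which establishes ex-post $CC_1$-fairness of the process.

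For ex-ante $CC_1$-fairness, I would use the same characterization to note that $a^{\infty}_{.,1}=m$ almost surely, so $\mathbb{E}[a^{\infty}_{.,1}]=m$. For any other $j\in\overline{n}$, the bound $a^{\infty}_{.,j}\leq m$ gives $\mathbb{E}[a^{\infty}_{.,j}]\leq m=\mathbb{E}[a^{\infty}_{.,1}]$. Consequently $\{j:\mathbb{E}[a^{\infty}_{.,1}]<\mathbb{E}[a^{\infty}_{.,j}]\}=\varnothing$, whose cardinality is $0<1$, confirming ex-ante $CC_1$-fairness as required by the definition in Section~\ref{sec:metrics}.

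There is no real obstacle here: the content was already done in Theorem~\ref{thm:absorbing_PA}, and the corollary merely rephrases that result in the language of the fairness metrics. The only thing to be slightly careful about is making the (trivial) probabilistic step explicit, namely that the almost-sure identity $a^{\infty}_{.,1}=m$ is what is needed to pass from the ex-post statement (which holds pointwise on the absorbing states) to the ex-ante statement (which is about expectations of $a^{\infty}$).
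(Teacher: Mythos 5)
Your proposal is correct and follows essentially the same route as the paper: both deduce from Theorem~\ref{thm:absorbing_PA} that every absorbing state has $b_{.,1}=m$, which immediately gives ex-post $CC_1$-fairness, and then use $\mathbb{E}[a^{\infty}_{.,1}]=m\geq\mathbb{E}[a^{\infty}_{.,j}]$ for ex-ante fairness. Your version merely makes the almost-sure limiting step explicit, which the paper leaves implicit.
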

\begin{proof}
This is an immediate consequence of Theorem \ref{thm:absorbing_PA}. Since $B$ is absorbing iff $b_{., 1} = m$, all absorbing states are $CC_1$ fair (i.e., we have ex-post fairness) and $\mathbb{E}[b_{., 1}] = m \geq \mathbb{E}[b_{., i} ]$ (i.e., ex-ante $CC_1$-fairness).
\hfill$\qed$
\end{proof}


\begin{corollary}
\label{cor:CC1_fair_ExtremePA} 
Under ExtremePA, the probability the outcome is ex-post $CC_1$-fair goes to $1/n$ as $m \to \infty$.
\end{corollary}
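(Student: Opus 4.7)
The plan is to reduce ex-post $CC_1$-fairness of an absorbing state to the event that $CC_1$ is the unique CC with the strict maximum number of followers after the very first round of recommendations, and then exploit the symmetry of the uniform-random first-round recommendations.

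First I would observe that, by Theorem~\ref{thm:absorbing_extreme}, every absorbing state $B$ reachable from $\textbf{0}$ has a unique $CC_i$ with $b_{.,i} > b_{.,j}$ for all $j\neq i$. Hence $B$ is $CC_1$-fair if and only if this unique maximum is $CC_1$, and so the target probability equals $\mathbb{P}(\text{the absorbing state has } CC_1 \text{ as its unique max})$.

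Next I would partition by the first-step transition from $\textbf{0}$ using the grouping of Fig.~\ref{fig:MC_extreme}. If round~1 lands in $S_i \cup S^*_i$ for some $i \neq 1$, then from round~2 onward ExtremePA recommends only $CC_i$, so the follower counts of the other CCs never change; in particular $CC_i$ remains the unique max in the absorbing state, which is therefore not $CC_1$-fair. Symmetrically, landing in $S_1 \cup S^*_1$ guarantees a $CC_1$-fair absorbing state. The remaining case is landing in $E$ (at least two CCs tied at the maximum), whose total contribution I would bound crudely by $p_{\textbf{0}, E}$, which a union bound over pairs $i\neq j$ together with Lemma~\ref{lem:eq->0} shows tends to $0$.

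Finally, the key quantitative input is symmetry: under ExtremePA the first round from $\textbf{0}$ is uniform random (all CCs are tied at $0$ followers), so $(a^1_{.,1}, \dots, a^1_{.,n})$ is exchangeable across CCs. Hence $p_{\textbf{0}, S_i \cup S^*_i}$ has the same value $q_m$ for every $i$, and the disjoint partition yields $n \cdot q_m + p_{\textbf{0}, E} = 1$, giving $q_m \to 1/n$. Sandwiching, the probability of absorbing into a $CC_1$-fair state lies between $q_m$ and $q_m + p_{\textbf{0}, E}$, both of which converge to $1/n$. The only potential obstacle is controlling the dynamics inside $E$; however, since even the crude bound $p_{\textbf{0}, E}$ already vanishes in the limit, no delicate analysis within $E$ is needed.
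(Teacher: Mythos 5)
Your proposal is correct and follows essentially the same route as the paper: sandwich $\mathbb{P}(CC_1\text{-fair})$ between $p_{\textbf{0}, S_1\cup S_1^*}$ and $p_{\textbf{0}, S_1\cup S_1^*} + p_{\textbf{0}, E}$, then use the exchangeability of the first (uniform) round together with Lemma~\ref{lem:eq->0} to get $p_{\textbf{0}, S_1\cup S_1^*}\to 1/n$ and $p_{\textbf{0}, E}\to 0$. Your version is a bit more self-contained (the paper defers the limit computations to the proof of Theorem~\ref{thm:time_abs}), but the decomposition and key inputs are identical.
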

\begin{proof}
The final outcome (a) is always $CC_1$-fair if $CC_1$ is the unique most followed CC after the first round, and (b) is $CC_1$-fair only if $CC_1$ is one of the CCs with a maximum number of followers after the first round. So, using the notation from Fig.~\ref{fig:MC_extreme}: 
$p_{\textbf{0}, S_1\cup S_1^*} \leq  \mathbb{P}(CC_1-\text{fair}) \leq p_{\textbf{0}, S_1\cup S_1^*} + p_{\textbf{0}, E}.$
The claim follows, since, as shown previously, $p_{\textbf{0}, S_1\cup S_1^*}\to 1/n$ and $p_{\textbf{0}, E}\to 0$.
\hfill$\qed$
\end{proof}

\begin{corollary}
\label{cor:CC_fair_ExtremePA}
Under ExtremePA, the probability the outcome is ex-post fair goes to $1/n!$ as $m \to \infty$.
\end{corollary}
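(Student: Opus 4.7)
The plan is to extend the argument used for Corollary~\ref{cor:CC1_fair_ExtremePA} by exploiting a key feature of ExtremePA: once some $CC_i$ becomes the unique most-followed CC, only $CC_i$ can be recommended thereafter, so the follower counts of every other CC are \emph{frozen} at their first-round values. Therefore the complete ordering of final follower counts is, up to asymptotically negligible events, determined entirely by the first round of uniform recommendations.

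First, I would note that by Theorems~\ref{thm:MC} and \ref{thm:absorbing_extreme} (and Fig.~\ref{fig:MC_extreme}), the process from $\mathbf{0}$ either (a) moves to a state in $S\cup S^*$, where some unique $CC_i$ is the maximum and from which $a^{\infty}_{.,j}=a^1_{.,j}$ for every $j\neq i$, or (b) it first transits into the set $E$ of tied states. By Lemma~\ref{lem:eq->0} and a union bound over the $\binom{n}{2}$ pairs, $p_{\mathbf{0},E}\to 0$ as $m\to\infty$, so case (b) is negligible.

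Second, I would characterize fairness on the non-negligible event. If $CC_i$ is the unique round-1 maximum, then $a^{\infty}_{.,i}\geq a^1_{.,i}$ and $a^{\infty}_{.,j}=a^1_{.,j}$ for $j\neq i$. Fairness requires $a^{\infty}_{.,1}\geq a^{\infty}_{.,2}\geq\dots\geq a^{\infty}_{.,n}$, which forces $i=1$ and $a^1_{.,2}\geq a^1_{.,3}\geq\dots\geq a^1_{.,n}$. Combined with $CC_1$ being the unique round-1 maximum, and modulo the vanishing tie events from Lemma~\ref{lem:eq->0}, this is equivalent to the strict ordering $a^1_{.,1}>a^1_{.,2}>\dots>a^1_{.,n}$. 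Thus I obtain the sandwich
\[
\mathbb{P}\bigl(a^1_{.,1}>\dots>a^1_{.,n}\bigr)\ \leq\ \mathbb{P}(\text{fair})\ \leq\ \mathbb{P}\bigl(a^1_{.,1}>\dots>a^1_{.,n}\bigr)+p_{\mathbf{0},E}.
\]

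Third, I would apply symmetry. Since the first round is uniform random, $(a^1_{.,1},\dots,a^1_{.,n})$ is exchangeable (multinomial with equal cell probabilities $1/n$), so $\mathbb{P}(a^1_{.,\pi(1)}>\dots>a^1_{.,\pi(n)})$ takes the same value for every permutation $\pi$ of $\overline{n}$. These $n!$ strict-ordering events, together with the event that some tie occurs, partition the sample space. The tie event has probability at most $\binom{n}{2}\mathbb{P}(a^1_{.,1}=a^1_{.,2})\to 0$ by Lemma~\ref{lem:eq->0}, so each strict ordering has probability $\to 1/n!$. Combining with the sandwich above and $p_{\mathbf{0},E}\to 0$ yields $\mathbb{P}(\text{fair})\to 1/n!$.

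The main obstacle is the careful bookkeeping in the sandwich: one must verify that every way of reaching a fair absorbing state via $E$ is already counted by the strict-ordering event plus the $p_{\mathbf{0},E}$ slack, and symmetrically that every strict-ordering outcome does lead to a fair absorbing state regardless of the extra round spent in $S\setminus S^*$. Both checks reduce to the freezing observation above, but they require exactly the same kind of transition analysis as in Theorem~\ref{thm:time_abs}; once that is in place the symmetry step is immediate.
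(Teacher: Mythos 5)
Your proposal is correct and follows essentially the same route as the paper's own proof: reduce to the first round via the freezing property of ExtremePA, discard ties using Lemma~\ref{lem:eq->0}, observe that only the strict ordering $a^1_{.,1}>\dots>a^1_{.,n}$ yields a fair outcome, and conclude by the symmetry of the $n!$ equiprobable orderings. The paper states this more tersely, but the extra bookkeeping you supply (the sandwich with $p_{\mathbf{0},E}$ and the verification that the strict ordering is also sufficient for fairness) is exactly the detail being elided there.
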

\begin{proof}
By Lemma~\ref{lem:eq->0}, when $m\to \infty$ the probability of achieving a tie in the number of followers after the first round goes to zero. After ignoring ties, only outcomes with $a^1_{., 1} > \dots > a^1_{., n}$ lead to a fair outcome. By symmetry, all strict orderings of $(a^1_{., i})_i$ have an equal probability.
Since there are $n!$ such orderings, $\mathbb{P}(a^1_{., 1} >  \dots > a^1_{., n}) \to 1/n!$ as $m\to \infty$. The conclusion follows.
\hfill$\qed$
\end{proof}

\begin{figure}%
    \centering
    \includegraphics[width=\textwidth]{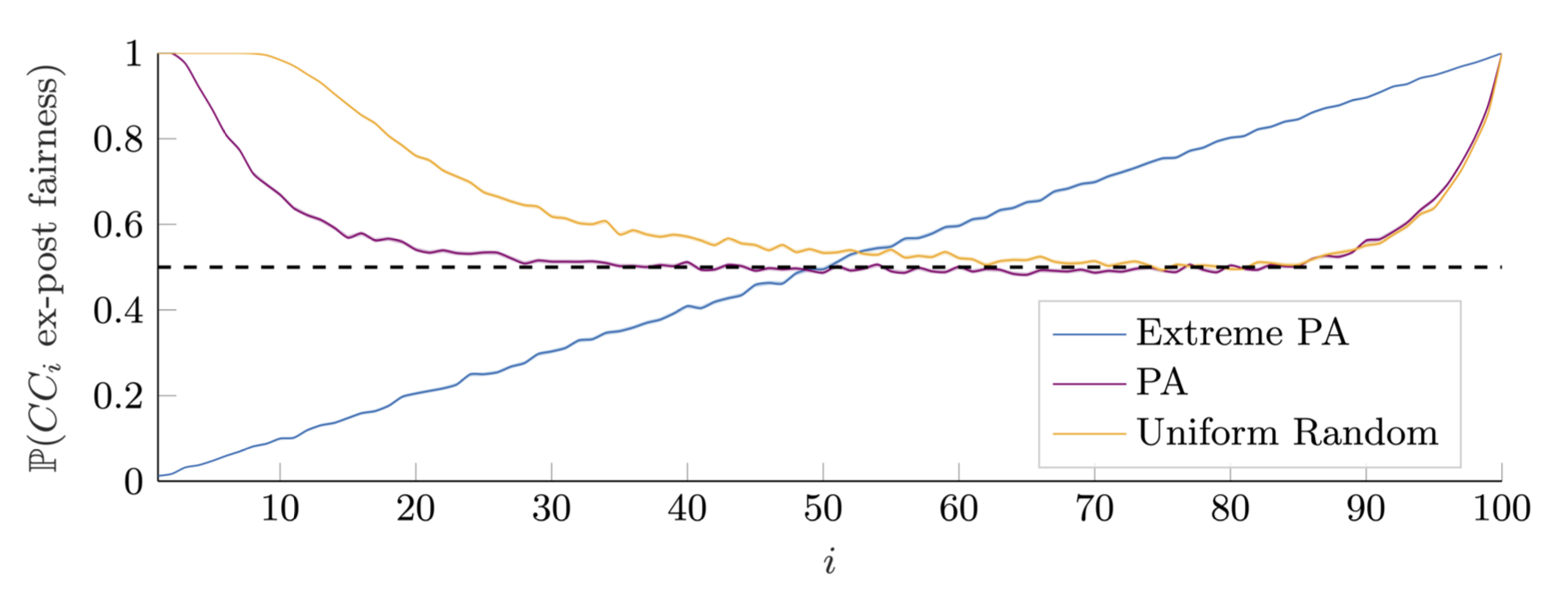}
\caption{
Sampling probability of ex-post $CC_i$-fairness, under the three RSs: ExtremePA (blue), PA (purple), and UR (orange). For each RS, we run $10000$ simulations until convergence with $n=100$ and $m=10000$. The 
dashed line denotes a reference value of $0.5$. Fairness under ExtremePA increases from best- to lowest-quality CCs. PA and UR achieve higher fairness for the best CCs.
}
\label{fig.fairness}
\end{figure}

The behavior depicted in Fig.~\ref{fig.fairness} is consistent with the theoretical results on ex-post fairness. In particular, while UR and PA guarantee $CC_1$-fairness at convergence, ExtremePA only achieves it with a low probability (about $1\%$). More precisely, the probability of ex-post $CC_i$-fairness grows linearly from $1/n$ to $1$ with the quality index $i$. In contrast, UR and PA generally achieve more (less) fair outcomes for the top (lowest) half of CCs. Between the two, the most exploratory recommendation strategy (UR) is more fair for most high-quality CCs, while PA is fair only for a small number of top (and bottom) quality CCs. 

\begin{theorem}
\label{thm:ex_ante_ExtremePA}
Under ExtremePA, the final outcome is ex-ante fair.
\end{theorem}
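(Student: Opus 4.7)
\textit{Proof plan.} The plan is to decompose $a^{\infty}_{.,k} = a^{1}_{.,k} + G_{k}$, where $G_{k}$ denotes the followers $CC_{k}$ acquires in rounds $t\ge 2$, and then exploit multinomial exchangeability to flatten the round-one term. Since the initial state is the zero matrix, the ExtremePA argmax at $t=1$ is all of $\overline{n}$, so round $1$ coincides with UR and $(a^{1}_{.,1},\ldots,a^{1}_{.,n})$ is multinomial with equal cell probabilities. Hence $\mathbb{E}[a^{1}_{.,k}] = m/n$ for every $k$, and the theorem reduces to showing $\mathbb{E}[G_{i}] \ge \mathbb{E}[G_{j}]$ for every $i<j$.

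Let $T_{t} := \arg\max_{l} a^{t}_{.,l}$. Only CCs in $T_{t}$ are recommended at round $t+1$, and only such recommendations can raise follower counts, so $T_{t+1}\subseteq T_{t}$; combined with Theorem~\ref{thm:absorbing_extreme} this forces $G_{k} = 0$ unless $k\in T_{1}$. Conditioning on $T_{1} = \mathcal{T}$ and using the exchangeability of $(a^{1}_{.,k})_{k}$ (which makes $\mathbb{P}(T_{1}=\mathcal{T})$ depend only on $|\mathcal{T}|$), the task reduces to the following conditional monotonicity: for every tied set $\mathcal{T}=\{i_{1}<\ldots<i_{r}\}$ and every $a<b$,
\begin{equation*}
\mathbb{E}[G_{i_{a}}\mid T_{1}=\mathcal{T}]\;\ge\;\mathbb{E}[G_{i_{b}}\mid T_{1}=\mathcal{T}].
\end{equation*}
The driving observation is that at any round $t+1$, conditional on the history, the expected gain of $CC_{k}\in T_{t}$ equals $|\{u:B^{t}(u)>k\}|/|T_{t}|$, where $B^{t}(u):=\min\{l:a^{t}_{u,l}=1\}$ is $u$'s current best; this quantity is non-increasing in $k$ along $T_{t}$, because lowering the threshold can only enlarge the eligible pool. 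I would then close the argument by induction on $t$, summing the per-round contributions, and finally combining with the exchangeable weights $\mathbb{P}(T_{1}=\mathcal{T})$ to recover $\mathbb{E}[G_{i}]\ge \mathbb{E}[G_{j}]$ for $i<j$.

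The main obstacle will be propagating the per-round monotonicity through the random shrinking of $T_{t}$: a lower-indexed tied CC is stochastically more likely to survive in $T_{t+1}$, but survival is correlated with the very gains we wish to compare. I would handle this by carrying, as part of the inductive invariant, both the monotonicity of $k\mapsto|\{u:B^{t}(u)>k\}|$ and a coupling, through the fresh uniform draws of round $t+1$, under which any event where $CC_{i_{b}}$ survives but $CC_{i_{a}}$ drops out of $T_{t+1}$ is dominated by its image under swapping $i_{a}$ and $i_{b}$ in those draws.
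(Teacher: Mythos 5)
Your plan is structurally different from, and more ambitious than, the paper's own argument: the paper states the driving intuition (once $CC_i$ becomes the unique leader it absorbs every user whose current best is worse than $i$) and then formalizes it only for $n=2$ via an explicit binomial computation, leaving general $n$ to a swapping heuristic. Your ingredients are the right ones for a general proof: the decomposition $a^{\infty}_{.,k}=a^{1}_{.,k}+G_k$ with $\mathbb{E}[a^{1}_{.,k}]=m/n$, the fact that $G_k=0$ unless $k\in T_1$, and the per-round identity that the conditional expected gain of $CC_k\in T_t$ is $|\{u:B^t(u)>k\}|/|T_t|$, which is non-increasing in $k$.

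Two genuine gaps remain, however. First, the reduction to your conditional monotonicity claim is incomplete. Writing $\mathbb{E}[G_i]=\sum_{\mathcal{T}\ni i}\mathbb{P}(T_1=\mathcal{T})\,\mathbb{E}[G_i\mid T_1=\mathcal{T}]$ and comparing with $\mathbb{E}[G_j]$, the terms with $i,j\in\mathcal{T}$ are covered by your claim, but the terms where $\mathcal{T}$ contains exactly one of the two must be paired via $\mathcal{T}\mapsto(\mathcal{T}\setminus\{i\})\cup\{j\}$, and the resulting cross-set inequality $\mathbb{E}[G_i\mid T_1=\mathcal{T}]\ge\mathbb{E}[G_j\mid T_1=(\mathcal{T}\setminus\{i\})\cup\{j\}]$ is not an instance of what you state; it needs its own relabelling argument on the round-one draws. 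Relatedly, conditioning on $T_1$ alone does not fix the law of the continuation — $G_k$ depends on the whole profile $(B^1(u))_u$ — so the exchangeability must be invoked at the level of that profile. Second, and more seriously, the coupling that is supposed to push the per-round monotonicity through the shrinking of $T_t$ is named but not constructed, and it is precisely the hard step: swapping $i_a$ and $i_b$ in the recommendation draws is measure-preserving but is \emph{not} a symmetry of the dynamics (a user with current best $l\in(i_a,i_b]$ follows $CC_{i_a}$ when recommended it but rejects $CC_{i_b}$), so after one swapped round the two coupled trajectories carry different follower vectors and hence different tie sets $T_{t+1}$; your induction must therefore compare two diverging processes, not two coordinates of one process, and the invariant that makes this work is never written down. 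A pathwise version — apply the involution to the entire driving noise and show $a^{\infty}_{.,i}(\omega)+a^{\infty}_{.,i}(\Phi\omega)\ge a^{\infty}_{.,j}(\omega)+a^{\infty}_{.,j}(\Phi\omega)$ — is likely the cleanest way to close this, and is essentially the route the paper gestures at for general $n$ without carrying it out either.
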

\begin{proof}
The proof is based on the following observation: when $CC_i$ becomes the most followed $CC$, they will be eventually followed by all CCs who did not follow a better-quality CC before. For example, if $CC_2$ is the most followed after the first round, all users (except those who were recommended $CC_1$) will follow $CC_2$ after round 2. Differently, if $CC_1$ was the most followed, then everybody will end up following $CC_1$ next, while if $CC_3$ was the most followed, then everybody except those who originally followed $CC_1$ or $CC_2$ will follow $CC_3$. This intuitively leads to $CC_2$ having more followers in expectation than $CC_3$ and fewer than $CC_1$. 
For simplicity, we will only formalize this intuition for $n=2$:
\begin{align*}
    \mathbb{E}[a^{\infty}_{., 1}] &= \sum_{k=0}^m \mathbb{P}(a^1_{., 1} = k) \cdot \mathbb{E}[a^{\infty}_{., 1}| a^1_{., 1} = k] 
    = \sum_{k=0}^{\left[\frac{m-1}{2} \right]} \frac{\binom{k}{m}}{n^m} \cdot k + \sum_{k = \left[\frac{m-1}{2} \right] +1}^{m} \frac{\binom{k}{m}}{n^m} \cdot m.
\end{align*}
As the sum is larger than $\mathbb{E}[a^{\infty}_{., 2}]= \sum_{k = 0}^{m} \frac{\binom{k}{m}}{n^m} \cdot k$, ExtremePA is ex-ante fair. \hfill$\qed$

\end{proof}


\section{Conclusion}
In this work we analyzed the role recommendations play in UGC-based social networks with respect to the individual fairness for CCs.
To do so, we (a) extended prior models with a RS-framework, (b) introduced a non-exploratory RS, and (c) defined ex-ante and ex-post measures of fairness. 
Our results showed that the network formation process is an absorbing Markov Chain with different absorbing states, expected times, and fairness guarantees depending on the RS. In particular, the expected absorbing time under ExtremePA is bounded by $2$, i.e., much faster than the ones under UR or PA (which are linearly or sub-linearly increasing in $m$).
Furthermore, while all studied RSs guarantee ex-ante fairness, ex-post fairness is rarely attainable: without exploration (i.e., under ExtremePA), ex-post fairness is achieved with probability $1/n!$, and even with exploration (i.e., under UR or PA), ex-post fairness is guaranteed only for the best (and lowest) CCs. 
In essence, exploration in RSs trades faster absorption times for higher probabilities of achieving fair outcomes for the best CCs.

%
%
\bibliographystyle{unsrt}
\bibliography{main}

\begin{thebibliography}{10}

\bibitem{bakshy2012role}
Eytan Bakshy, Itamar Rosenn, Cameron Marlow, and Lada Adamic.
\newblock The role of social networks in information diffusion.
\newblock In {\em Proceedings of the 21st international conference on World
  Wide Web}, pages 519--528, 2012.

\bibitem{hall2018brexit}
Wendy Hall, Ramine Tinati, and Will Jennings.
\newblock From \uppercase{b}rexit to \uppercase{t}rump: Social media’s role
  in democracy.
\newblock {\em Computer}, 51(1):18--27, 2018.

\bibitem{adams1963towards}
J~Stacy Adams.
\newblock Towards an understanding of inequity.
\newblock {\em The journal of abnormal and social psychology}, 67(5):422, 1963.

\bibitem{pagan2021meritocratic}
Nicol{\`o} Pagan, Wenjun Mei, Cheng Li, and Florian D{\"o}rfler.
\newblock A meritocratic network formation model for the rise of social media
  influencers.
\newblock {\em Nature communications}, 12(1):1--12, 2021.

\bibitem{zipf2016human}
George~Kingsley Zipf.
\newblock {\em Human behavior and the principle of least effort: An
  introduction to human ecology}.
\newblock Ravenio Books, 2016.

\bibitem{mcnee2006being}
Sean~M McNee, John Riedl, and Joseph~A Konstan.
\newblock Being accurate is not enough: how accuracy metrics have hurt
  recommender systems.
\newblock In {\em CHI'06 extended abstracts on Human factors in computing
  systems}, pages 1097--1101, 2006.

\bibitem{kunaver2017diversity}
Matev{\v{z}} Kunaver and Toma{\v{z}} Po{\v{z}}rl.
\newblock Diversity in recommender systems--a survey.
\newblock {\em Knowledge-based systems}, 123:154--162, 2017.

\bibitem{helberger2018exposure}
Natali Helberger, Kari Karppinen, and Lucia D’Acunto.
\newblock Exposure diversity as a design principle for recommender systems.
\newblock {\em Information, Communication \& Society}, 21(2):191--207, 2018.

\bibitem{gravino2019towards}
Pietro Gravino, Bernardo Monechi, and Vittorio Loreto.
\newblock Towards novelty-driven recommender systems.
\newblock {\em Comptes Rendus Physique}, 20(4):371--379, 2019.

\bibitem{guo2021stereotyping}
Wenshuo Guo, Karl Krauth, Michael Jordan, and Nikhil Garg.
\newblock The stereotyping problem in collaboratively filtered recommender
  systems.
\newblock In {\em Equity and Access in Algorithms, Mechanisms, and
  Optimization}, pages 1--10. ACM, 2021.

\bibitem{myerson1981utilitarianism}
Roger~B Myerson.
\newblock Utilitarianism, egalitarianism, and the timing effect in social
  choice problems.
\newblock {\em Econometrica: Journal of the Econometric Society}, pages
  883--897, 1981.

\bibitem{chaney2018algorithmic}
Allison~JB Chaney, Brandon~M Stewart, and Barbara~E Engelhardt.
\newblock How algorithmic confounding in recommendation systems increases
  homogeneity and decreases utility.
\newblock In {\em Proceedings of the 12th ACM Conference on Recommender
  Systems}, pages 224--232, 2018.

\bibitem{watts1998collective}
Duncan~J Watts and Steven~H Strogatz.
\newblock Collective dynamics of ``small-world'' networks.
\newblock {\em Nature}, 393(6684):440, 1998.

\bibitem{barabasi1999emergence}
Albert-L{\'a}szl{\'o} Barab{\'a}si and R{\'e}ka Albert.
\newblock Emergence of scaling in random networks.
\newblock {\em Science}, 286(5439):509--512, 1999.

\bibitem{snijders1996stochastic}
Tom~AB Snijders.
\newblock Stochastic actor-oriented models for network change.
\newblock {\em Journal of mathematical sociology}, 21(1-2):149--172, 1996.

\bibitem{jackson2010social}
Matthew~O Jackson.
\newblock {\em Social and economic networks}.
\newblock Princeton university press, 2010.

\bibitem{verma2018fairness}
Sahil Verma and Julia Rubin.
\newblock Fairness definitions explained.
\newblock In {\em 2018 ieee/acm international workshop on software fairness
  (fairware)}, pages 1--7. IEEE, 2018.

\bibitem{garg2020fairness}
Pratyush Garg, John Villasenor, and Virginia Foggo.
\newblock Fairness metrics: A comparative analysis.
\newblock In {\em 2020 IEEE International Conference on Big Data (Big Data)},
  pages 3662--3666. IEEE, 2020.

\bibitem{mitchell2021algorithmic}
Shira Mitchell, Eric Potash, Solon Barocas, Alexander D'Amour, and Kristian
  Lum.
\newblock Algorithmic fairness: Choices, assumptions, and definitions.
\newblock {\em Annual Review of Statistics and Its Application}, 8:141--163,
  2021.

\bibitem{binns2020apparent}
Reuben Binns.
\newblock On the apparent conflict between individual and group fairness.
\newblock In {\em Proceedings of the 2020 conference on fairness,
  accountability, and transparency}, pages 514--524, 2020.

\bibitem{lucherini2021t}
Eli Lucherini, Matthew Sun, Amy Winecoff, and Arvind Narayanan.
\newblock T-recs: A simulation tool to study the societal impact of recommender
  systems.
\newblock {\em arXiv preprint arXiv:2107.08959}, 2021.

\end{thebibliography}

\end{document}